\let\ORIGRightarrow=\Rightarrow
\let\Rightarrow=\ORIGRightarrow
\newcommand{\reals}{\mathbb{R}}
\newcommand{\beq}{\begin{equation}}
\newcommand{\eeq}{\end{equation}}
\newcommand{\beqa}{\begin{eqnarray}}
\newcommand{\eeqa}{\end{eqnarray}}
\newcommand{\norm}[1]{\left\|#1\right\|}
\newcommand{\set}[1]{\left\{#1\right\}}
\newcommand{\paren}[1]{\left(#1\right)}
\newcommand{\ul}{\underline}
\newtheorem{lemm}{Lemma}
\newtheorem{exmpl}{Example}
\begin{document}

\title{On Universal Sensor Registration
}

\author{\IEEEauthorblockN{Daniel Sigalov}
\IEEEauthorblockA{Rafael -- Advanced Defense Systems\\Israel\\
Email: danielsi@rafael.co.il}
\and
\IEEEauthorblockN{Aharon Gal}
\IEEEauthorblockA{Rafael -- Advanced Defense Systems\\Israel\\
Email: aharong@rafael.co.il}
\and
\IEEEauthorblockN{Boaz Vigdor}
\IEEEauthorblockA{Rafael -- Advanced Defense Systems\\Israel\\
Email: boazv@rafael.co.il}
}

\markboth{Submitted to IEEE Trans. on Signal Processing}%
{Shell \MakeLowercase{\textit{et al.}}: Bare Demo of IEEEtran.cls for Journals}

\maketitle

\begin{abstract}
We present a simple approach for sensor registration in target tracking applications. The proposed method uses targets of opportunity and, without making assumptions on their dynamical models, allows simultaneous calibration of multiple three- and two-dimensional sensors. Whereas for two-sensor scenarios only relative registration is possible, in practical cases with three or more sensors unambiguous absolute calibration may be achieved. The derived algorithms are straightforward to implement and do not require tuning of parameters. The performance of the algorithms is tested in a numerical study.
\end{abstract}

\begin{IEEEkeywords}
Sensor registration, bias calibration, sensor misalignment
\end{IEEEkeywords}

\section{Introduction}\label{section:intro}
Sensor calibration is a task of major importance in sensor fusion applications such as multi-sensor target tracking. The goal of the procedure is estimating and compensating for systematic, sensor related errors. These include additive or multiplicative biases in range, azimuth or elevation measurements of a radar, positional misplacement errors, and angular misalignment errors. Ignoring such errors, or failing to accurately compensate for their presence might lead to significant performance degradation of the target tracking system. For example, the data association module might fail correlating measurements from different sensors representing the same target. Consequently, a ``ghost'' track will be initiated. Alternatively, successful measurement association might come at the expense of the tracking filter performance since the model of the latter typically does not take into account systematic errors, which will, consequently, be interpreted as large innovations.

Various aspects of the sensor calibration problem have drawn much focus of the information fusion community.
In~\cite{helmick1993removal}, assuming small misalignment errors, a slave sensor was calibrated relatively to the master sensor using an extended Kalman filter. A similar, Kalman-based approach was taken in~\cite{nabaa1999solution} where a set of sensor was calibrated relatively to a master sensor. The authors also mentioned an unsuccessful attempt to perform an absolute calibration. An independent, yet related line of research is summarized in~\cite{fortunati2011least,fortunati2012least}, where a least-squares (LS) approach was taken for relative and absolute sensor registration. Under standard assumptions of small registration errors, that allow linearization of the nonlinear equation tying biases and target dynamics, the authors reported non-efficiency of the proposed algorithms. Sensitivity of the estimators in the above contributions was addressed in~\cite{tian2014robust}.
An approach of bias calibration using partial Kalman filter data was recently proposed in~\cite{taghavi2016practical}. A comprehensive literature survey summarizing most of the Kalman filter-based methods may be found in~\cite{topland2016joint}.

A different approach to data registration, somewhat overlooked by the classical information fusion community, may be found in a recent series of papers~\cite{chaudhury2015global,khoo2016non,sanyal2017registration}. Partially motivated by computer vision applications, the authors use semidefinite programming framework to achieve data/sensor registration of 3D sensors.

It is evident from the above literature survey, that the sensor registration problem has been addressed only partially in the past. Specifically, absolute registration of two-dimensional sensors in three-dimensional space still remains an interesting and open problem. In this paper we address the problem of calibrating angular misalignment errors of a group of two- or threes-dimensional sensors, 
and propose an approach that allows absolute (as opposed to relative) calibration of three or more sensors in all practical scenarios. 

The contribution of the present work is threefold. First, we derive simple calibration algorithms that do not require tuning of paraments and make no assumptions on the dynamical models of the targets under consideration. The second contribution is the absolute calibration capability of three-dimensional sensors that was addressed only partially in the information fusion community past. The final and major contribution is the absolute calibration capability of two-dimensional sensors in 3D space which, to the best of our knowledge, has not been addressed elsewhere before.

The remainder of the paper is organized as follows. In Section~\ref{section:problem} we formally state the problem. In Section~\ref{section:algorithm} we gradually derive the algorithms starting with the simplest case of relative calibration of two 3D sensors and concluding with absolute calibration of an arbitrary number of 2D sensors. In Section~\ref{section:numerical} we present a comprehensive numerical study for assessing the performance of the derived algorithm. Concluding remarks are made in Section~\ref{section:concluding}.


\section{Problem Formulation}
\label{section:problem}
Let $\mathcal{O}$ be an arbitrarily chosen origin of a north-east-down (NED) cartesian coordinate system.
We consider $S$ sensors at precisely known locations \begin{align}\label{Eq:problem:biased:sensor_positions}
\ul{\ell}_s\triangleq(x_{0,s},y_{0,s},z_{0,s})^T,\,s=1,\ldots,S. \end{align}
The sensors may be either three- or two-dimensional. In the former case, each sensor measurement carries the range, the azimuth and the elevation of the target, all of which are computed with respect to the local NED coordinated system $\mathcal{O}_s$ centered at the sensor location. For a two-dimensional sensor, only azimuth and elevation measurements are made.

Each sensor is characterized by a sensor-specific angular misalignment bias captured by a deterministic (but unknown) rotation matrix $\tilde{A}_s$. Mathematically, this may be formulated as follows. Assuming that the true position of a target, relative to $\mathcal{O}$, is $\ul{p}_0\triangleq(x_0,y_0,z_0)^T$, the biased position of this target in the sensor's local NED coordinate system having origin at $\mathcal{O}_s$ is given by
\begin{align}\label{Eq:problem:biased:position}
\ul{p}_s&\triangleq(x_s,y_s,z_s)= \tilde{A}_s(\ul{p}_0-\ul{\ell}_s).
\end{align}
The considered bias represents unknown calibration errors in the angular sensor alignment.
As mentioned, in this work, we do not address other types of biases such as misplacement errors, and additive/multiplicative sensor biases.

Consequently, for a three-dimensional sensor, the measurement of the target is
\begin{align}\label{Eq:problem:sensor:measurement}
\ul{m}_s\triangleq
\begin{pmatrix}{\rm rng}_s\\{\rm az}_s\\{\rm el}_s
\end{pmatrix}
=\begin{pmatrix}
\sqrt{ x_s^2+ y_s^2+ z_s^2}+n_{r,s}\\
\arctan{( y_s/ x_s)+n_{a,s}}\\
\arctan{( z_s/\sqrt{ x_s^2+ y_s^2})+n_{e,s}}
\end{pmatrix},
\end{align}
where $n_{r,s}$, $n_{a,s}$, and $n_{e,s}$ are measurement noises in, respectively, range, azimuth and elevation. These are taken to be independent, Gaussian, with zero mean and known variances $\sigma_r^2$, $\sigma_a^2$, and $\sigma_e^2$. For two-dimensional sensors the above remains valid except that the range measurement is absent in~\eqref{Eq:problem:sensor:measurement}.

Consider next a set of $n$ distinct target positions
\begin{align}\label{Eq:problem:set:positions:global}
\set{\ul{p}_0^i,\,i=1,\ldots,n}.
\end{align}
These may refer to different appearances of the same target or represent different objects either at the same or at different times. In any event, we do not assume temporal dependence between the elements of the set. In particular, no underlying dynamical model is presumed to be valid. Suppose that the above targets are observed by $S$ sensors. The above set of positions translates, for sensor $s=1,\ldots,S$, into the following set of vectors in $\reals^3$ in the local coordinate system $\mathcal{O}_s$
\begin{align}\label{Eq:problem:set:positions:local}
\set{\ul{p}_{s}^i,\,i=1,\ldots,n},\,s=1,\ldots,S.
\end{align}
Consequently, for each sensor, a set of $n$ measurements is generated
\begin{align}\label{Eq:problem:set:measurements}
\set{\ul{m}_{s}^i,\,i=1,\ldots,n},\,s=1,\ldots,S.
\end{align}
Here $\ul{m}_{s}^i$ represents a measurement of a physical object $\ul{p}_0^i$ generated by sensor $s$. Recall that $\ul{p}_0^i\in\reals^3$ in the global coordinate system with origin $\mathcal{O}$, $\ul{p}_s^i\in\reals^3$ in the local coordinate system with origin $\mathcal{O}_s$, and $\ul{m}_{s}^i$ is a noisy observation of the local version $\ul{p}_s^i$ carrying azimuth, elevation, and, potentially, range data. We further assume that data association and time synchronisation have been performed externally. In other words, $\ul{m}_{1}^i,\ul{m}_{2}^i,\ldots,\ul{m}_{S}^i$ represent the same target instance at the same time. These are not very restrictive assumptions since bias calibration is typically performed as a preliminary procedure using well separated targets of opportunity. In addition, time synchronisation may be achieved by, e.g., time-interpolation.

The goal of this paper may now be stated informally as follows. Given $S$ sets of measurements defined in~\eqref{Eq:problem:set:measurements}, we aim at estimating the individual rotation matrices $\tilde{A}_s,\,s=1,\ldots,S$ defined in Eq.~\ref{Eq:problem:biased:position}. We note that, a-priori, estimability, existence or uniqueness of the individual rotation matrices are not obvious. Before discussing the specific conditions, we proceed with formally formulating the objective. Observing~\eqref{Eq:problem:biased:position} one readily obtains the following expression for $\ul{p}_0^i$ that holds for all $s=1,\ldots,S$ and for all $i=1,\ldots,n$
\begin{align}\label{Eq:problem:biased:position:inverse}
\ul{p}_0^i&=A_s\ul{p}_s^i+\ul{\ell}_s,
\end{align}
where we introduced the notation $A_s\triangleq\tilde{A}_s^T$.
We thus consider the following minimization problem in order to obtain $A_s,\,s=1,\ldots,S$.
\begin{align}\label{Eq:problem:minimization:objective}
&\min_{A_1,\ldots,A_S}\sum_{s,t}\sum_{i=1}^n\norm{A_{s}\ul{p}_{s}^i+\ul{\ell}_{s}-\paren{A_{t}\ul{p}_{t}^i+\ul{\ell}_{t}}}^2\\\label{Eq:problem:minimization:constraint}
&{\rm s.t.}\,A_{1},\ldots,A_{S}\,\,{\rm rotation\, matrices},
\end{align}
where $s,t\in\set{1,\ldots,S}$, $s\neq t$, and $\norm{\ul{x}}$ denotes the $L^2$-norm of $\ul{x}$. We emphasize that $\ul{\ell}_{s}$ and $\ul{\ell}_{t}$ are the known locations of the sensors $s$ and $t$, respectively, ${\ul{p}_{s}^i}$ and ${\ul{p}_{t}^i}$ are vectors computed from the raw sensor measurements. The only unknowns in the above set of optimization problems are the sensor-specific rotation matrices. Note that~\eqref{Eq:problem:minimization:objective} is not a standard least-squares (LS) problem since the optimization domain is constrained to be the set of rotation matrices (orthogonal matrices with determinant $1$).

\section{Algorithm Derivation}
\label{section:algorithm}
In this section we derive and discuss the algorithm for the estimation of the sensor-specific rotation matrices as defined in~\eqref{Eq:problem:minimization:objective} and~\eqref{Eq:problem:minimization:constraint}. We begin the discussion with the easiest case of two 3D sensors and a single unknown matrix and proceed gradually to the most complex case of an arbitrary number of 2D sensors. In all algorithms in the sequel all the rotation matrices are initialized as identity matrices.

\subsection{Two 3D Sensors, Relative Calibration}
Let $S=2$ and assume that both sensors deliver three-dimensional measurements comprising range, azimuth, and elevation data as defined in~\eqref{Eq:problem:sensor:measurement}. Moreover, assume that there are no angular misalignment errors in the second sensor, namely, $A_2=I_{3\times3}$, where $I_{3\times3}$ is the $3\times3$ identity matrix. This setting may also be interpreted as angularly calibrating one sensor relatively to another (possibly also biased) sensor. In this case, the considered problem~\eqref{Eq:problem:minimization:objective}-\eqref{Eq:problem:minimization:constraint} reduces to
\begin{align}\label{Eq:algorithm:minimization:objective:2sensors:3d}
&\min_{A_{1}}\sum_{i=1}^n\norm{A_{1}\ul{p}_{1}^i+\ul{\ell}_{1}-\paren{\ul{p}_{2}^i+\ul{\ell}_{2}}}^2\\\label{Eq:algorithm:minimization:constraint:2sensors:3d}
&{\rm s.t.}\,A_{1}\,{\rm rotation\, matrix}.
\end{align}
This is a standard Wahba's problem~\cite{wahba1965least} which may be solved optimally by a variety of algorithms~\cite{markley1988attitude,arun1987least}. In the sequel, we refer to a routine solving the following formulation of the Wahba's problem
\begin{align}\label{Eq:algorithm:wahba:objective}
&\min_{A}\sum_{i=1}^n\norm{A\ul{x}^i-\ul{y}^i}^2\\\label{Eq:algorithm:wahba:constraint}
&{\rm s.t.}\,A\,{\rm rotation\, matrix},
\end{align}
as $\verb!Wahba!(\{\ul{x}^i\},\{\ul{y}^i\})$. Several possible implementations of the algorithm may be found in~\cite{markley1988attitude}.
The procedure for the relative calibration of angular misalignment errors in the case of two 3D, noiseless sensors is summarized in Alg.~\ref{Alg1:2rdr:3d:relative}.
\renewcommand{\algorithmicrequire}{\textbf{Input:}}
\renewcommand{\algorithmicensure}{\textbf{Output:}}
\begin{algorithm}[tbh]\caption{Relative Calibration of $2$ 3D Sensors}
\begin{algorithmic}[1]
\REQUIRE{$\set{\ul{m}_{s}^i,\,i=1,\ldots,n},\,s=1,2$.}%
\STATE{Compute $\{\ul{p}_{s}^i,\,i=1,\ldots,n\},\,s=1,2$.}\label{alg1:step:1}
\STATE{Compute $\{\ul{\tilde{p}}_{2}^i\triangleq\ul{p}_{2}^i+\ul{\ell}_{2}-\ul{\ell}_{1},\,i=1,\ldots,n\}$.}\label{alg1:step:2} %
\STATE{Compute $A_1=\verb!Wahba!(\{\ul{{p}_{1}}^i\},\{\ul{\tilde{p}_{2}}^i\})$.}\label{alg1:step:3}
\ENSURE{$A_1$.}%
\end{algorithmic}
\label{Alg1:2rdr:3d:relative}
\end{algorithm}
$ $\\
Step $1$ of Alg.~\ref{Alg1:2rdr:3d:relative} is a simple transformation of a measurement comprising range, azimuth, and elevation data into cartesian coordinate system centered at the sensor. Step $2$ is a translation of the vectors defined by the $\mathcal{O}_1$ coordinate system into the one defined by $\mathcal{O}_2$. In the absence of noise, two pairs of measurements suffice to find the required rotation. However, since measurement noise is inevitable, as defined in~\eqref{Eq:problem:sensor:measurement}, we need to modify the above algorithm by computing Step $1$ using noisy data, meaning the resulting sets $\{\ul{p}_{s}^i,\,i=1,\ldots,n\},\,s=1,2$ are noisy approximations of the actual cartesian target positions. Although, in principle, two pairs of measurements will still result in a valid rotation matrix, a larger number of such pairs will be required in order to improve the noise-robustness of the estimate.

\subsection{Two Heterogeneous Sensors, Relative Calibration}
\label{section:TwoHetrogeneous}
We proceed with the case of $S=2$ sensors where one of the sensors delivers, as before, three-dimensional measurements, and the other  generates two-dimensional data comprising (noisy) azimuth and elevation.
In this setting one cannot apply directly the procedure described in Alg.~\ref{Alg1:2rdr:3d:relative} since the two-dimensional sensor's measurement does not have a unique three-dimensional representation of the target position. To comply with the notation of the previous case we assume that sensor $s=2$ is the bias-free, three-dimensional sensor, while sensor $s=1$ generates two-dimensional data and has an unknown angular misalignment bias captured by the rotation matrix $A_1$.

Although a two-angle measurement does not correspond to a cartesian target position, it is equivalent to a direction vector that can be used to find the desired rotation matrix by modifying the objective~\eqref{Eq:algorithm:minimization:objective:2sensors:3d} as follows

\begin{align}\label{Eq:algorithm:minimization:objective:2sensors:23d}
&\min_{A_{1}}\sum_{i=1}^n\norm{A_{1}\ul{q}_{1}^i-\ul{q}_{2}^i}^2\\\label{Eq:algorithm:minimization:constraint:2sensors:23d}
&{\rm s.t.}\,A_{1}\,{\rm rotation\, matrix},
\end{align}
where $\ul{q}_{1}^i$ is the (noisy) direction vector (with norm $1$) that may be computed directly from the 2D sensor data,
\begin{align}
\label{Eq:algorithm:minimization:objective:2sensors:23d:q2}
\ul{q}_{2}^i&\triangleq\frac{\ul{\tilde{p}}_{2}^i}{\|\ul{\tilde{p}}_{2}^i\|},
\end{align}
and $\ul{\tilde{p}}_{2}^i=\ul{p}_{2}^i+\ul{\ell}_{2}-\ul{\ell}_{1}$.

In other words, instead of finding the optimal rotation between two sets of measurements, in the present case, we find the optimal rotation between the direction vectors resulting from the two-dimensional sensor data and from the translated versions of the three-dimensional sensor data. The complete routine is summarized in Alg.~\ref{Alg2:2rdr:23d:relative}.
\renewcommand{\algorithmicrequire}{\textbf{Input:}}
\renewcommand{\algorithmicensure}{\textbf{Output:}}
\begin{algorithm}[tbh]\caption{Relative Calibration of $2$ Heterogeneous Sensors}
\begin{algorithmic}[1]
\REQUIRE{$\set{\ul{m}_{s}^i,\,i=1,\ldots,n},\,s=1,2$.}%
\STATE{Compute $\{\ul{q}_{1}^i,\,i=1,\ldots,n\}$.}\label{alg2:step:1}
\STATE{Compute $\{\ul{p}_{2}^i,\,i=1,\ldots,n\}$.}\label{alg2:step:2}
\STATE{Compute $\{\ul{\tilde{p}}_{2}^i\triangleq\ul{p}_{2}^i+\ul{\ell}_{2}-\ul{\ell}_{1},\,i=1,\ldots,n\}$.}\label{alg2:step:3} %
\STATE{Compute $\{\ul{q}_{2}^i\triangleq\ul{\tilde{p}}_{2}^i/\|\ul{\tilde{p}}_{2}^i\|,\,i=1,\ldots,n\}$.}\label{alg2:step:4}
\STATE{Compute $A_1=\verb!Wahba!(\{\ul{{q}}_{1}^i\},\{\ul{{q}}_{2}^i\})$.}\label{alg2:step:5}
\ENSURE{$A_1$.}%
\end{algorithmic}
\label{Alg2:2rdr:23d:relative}
\end{algorithm}
$ $\\

\subsection{Two 3D Sensors, Absolute Calibration}
We are now ready to present the idea for the joint absolute calibration of an arbitrary number of three-dimensional sensors. We accomplish this task in two steps. In the present subsection, we consider only two such sensors and present the routine, based on the solution of the Wahba's problem, that generates valid rotation matrices for each of the sensors. We generalize the idea to an arbitrary number of sensors in the following subsection.

Consider the optimization problem~\eqref{Eq:problem:minimization:objective}-\eqref{Eq:problem:minimization:constraint} for $S=2$:
\begin{align}\label{Eq:algorithm:minimization:objective:2sensors}
&\min_{A_{1},A_{2}}\sum_{i=1}^n\norm{A_{1}\ul{p}_{1}^i+\ul{\ell}_{1}-\paren{A_{2}\ul{p}_{2}^i+\ul{\ell}_{2}}}^2\\\label{Eq:algorithm:minimization:constraint:2sensors}
&{\rm s.t.}\,A_{1},A_{2}\,{\rm rotation\, matrices},
\end{align}

While solving optimally, for both $A_1$ and $A_2$, may seem like a non-trivial task, recall that performing the optimization for either $A_1$ or $A_2$ separately is easy and the optimal routine is given in Alg.~\ref{Alg1:2rdr:3d:relative}. We thus consider the following approach motivated by the alternating least-squares (ALS) method~\cite{jain2013low,hastie2015matrix}. First, both $A_{1}$ and $A_{2}$ are initialized as identity matrices. Then, iteratively, two optimization steps are performed -- the Wahba's problem is solved with respect to $A_1$ while holding $A_2$ fixed at its latest value and, consequently, the Wahba's problem is solved with respect to $A_2$ while holding $A_1$ fixed at its latest value. The complete routine is summarized in Alg.~\ref{Alg3:2rdr:3d:absolute}.

\begin{algorithm}[tbh]\caption{Absolute Calibration of $2$ 3D Sensors}
\begin{algorithmic}[1]
\REQUIRE{$\set{\ul{m}_{s}^i,\,i=1,\ldots,n},\,s=1,2$.}
\STATE{Compute $\{\ul{p}_{s}^i,\,i=1,\ldots,n\},\,s=1,2$.}\label{alg3:step:1}
\STATE{Initialize ${A}_{1}^0=I_{3\times3}$, ${A}_{2}^0 =I_{3\times3}$, $j=1$.}\label{alg3:step:2}
\REPEAT\label{alg3:step:3}
\STATE{Compute $\{\ul{\tilde{p}}_{2}^i
\triangleq{A}_{2}^{j-1}\ul{p}_{2}^i+\ul{\ell}_{2}-\ul{\ell}_{1},\,i=1,\ldots,n\}$.}\label{alg3:step:4} %
\STATE{Compute $A_1^j=\verb!Wahba!(\{\ul{{p}}_{1}^i\},\{\ul{\tilde{p}}_{2}^i\})$.}\label{alg3:step:5}
\STATE{Compute $\{\ul{\tilde{p}}_{1}^i
\triangleq{A}_{1}^{j}\ul{p}_{1}^i+\ul{\ell}_{1}-\ul{\ell}_{2},\,i=1,\ldots,n\}$.}\label{alg3:step:6} %
\STATE{Compute $A_2^j=\verb!Wahba!(\{\ul{{p}}_{2}^i\},\{\ul{\tilde{p}}_{1}^i\})$.}\label{alg3:step:7}
\STATE{$j=j+1$}\label{alg3:step:8}
\UNTIL{Stopping criteria are met.} \label{alg3:step:9}
\ENSURE{$A_1, A_2$.}%
\end{algorithmic}
\label{Alg3:2rdr:3d:absolute}
\end{algorithm}
$ $\\

Note that Step~\ref{alg3:step:4} (and~\ref{alg3:step:6}) only rotate (using the currently known rotation matrix) and translate (using the known sensor locations) one set of vectors to the coordinate system of the second set as a preliminary step to the actual rotation matrix computation in Step~\ref{alg3:step:5} (and~\ref{alg3:step:7}).

The convergence of the algorithm is addressed in Lemma~\ref{lemma:convergence:2rdr}.
\begin{lemm}\label{lemma:convergence:2rdr}
Algorithm~\ref{Alg3:2rdr:3d:absolute} converges.
\end{lemm}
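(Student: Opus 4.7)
The plan is to recognize Algorithm~\ref{Alg3:2rdr:3d:absolute} as a two-block alternating minimization on the joint objective
\begin{align*}
F(A_1,A_2)\triangleq\sum_{i=1}^n\norm{A_1\ul{p}_1^i+\ul{\ell}_1-A_2\ul{p}_2^i-\ul{\ell}_2}^2,
\end{align*}
with each block solved optimally by the Wahba routine, and then to invoke monotone convergence on the resulting scalar sequence of objective values. First I would verify that Step~\ref{alg3:step:5} genuinely minimizes the partial objective $A_1\mapsto F(A_1,A_2^{j-1})$ over the rotation group. By the definition of $\ul{\tilde{p}}_2^i$ in Step~\ref{alg3:step:4}, the Wahba objective $\sum_i\norm{A_1\ul{p}_1^i-\ul{\tilde{p}}_2^i}^2$ coincides with $F(A_1,A_2^{j-1})$, so $A_1^j$ is a minimizer. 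An identical argument with the roles of $A_1$ and $A_2$ swapped shows that Step~\ref{alg3:step:7} produces a minimizer of $A_2\mapsto F(A_1^j,A_2)$ over rotation matrices.

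Combining the two minimization statements within iteration $j$ yields the chain
\begin{align*}
F(A_1^j,A_2^j)\leq F(A_1^j,A_2^{j-1})\leq F(A_1^{j-1},A_2^{j-1}),
\end{align*}
so the sequence $\set{F(A_1^j,A_2^j)}_{j\geq 0}$ is monotonically non-increasing. Since $F\geq 0$, it is bounded below, and the monotone convergence theorem guarantees that the objective values converge to some finite limit $F^\star\geq 0$, which I take to be the content of the lemma.

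The one delicate point is the interpretation of ``convergence.'' The argument above establishes convergence of the objective values, whereas convergence of the iterates $(A_1^j,A_2^j)$ themselves is a strictly stronger statement. I would briefly note that, if the stronger conclusion is desired, it can be obtained (at least along subsequences) by exploiting compactness of the rotation group, extracting a convergent subsequence, and using continuity of $F$ together with the per-step optimality of the Wahba sub-problems to show that any accumulation point is a block-coordinate-wise minimizer of $F$. For the lemma as stated, however, convergence of the monotone scalar sequence $\set{F(A_1^j,A_2^j)}$ is the natural and standard reading, and it follows essentially by inspection once the alternating-minimization structure has been made explicit.
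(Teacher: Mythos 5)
Your proposal is correct and follows essentially the same route as the paper's own (much terser) proof: each Wahba step optimally solves one block of the alternating minimization, so the objective sequence is non-increasing and bounded below, hence convergent. Your closing remark distinguishing convergence of objective values from convergence of the iterates matches the caveat the authors themselves make immediately after the lemma.
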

\begin{proof}Since $\verb!Wahba!(\{\ul{x}^i\},\{\ul{y}^i\})$ minimizes the cost $\sum_{i=1}^n\norm{A\ul{x}^i-\ul{y}^i}^2$, it is easy to see that the sequence of values of the objective is non-increasing and bounded from below and, therefore, converges.
\end{proof}
Unfortunately, the lemma only guaranties convergence of the cost to a local optimum. It is possible that the corresponding values of the rotation matrices will alternate between several values all resulting in the same cost of~\eqref{Eq:algorithm:minimization:objective:2sensors}. To gain better understanding of the reasoning behind Alg.~\ref{Alg3:2rdr:3d:absolute} 
we consider the following synthetic example.

\begin{exmpl}\label{exmpl:2rdr:planar} Consider the coplanar setup presented in Fig.~\ref{fig:algorithm:two3d:radars:plane:setup}.
\begin{figure}[h!t!]\centering
{\includegraphics[width=0.45\textwidth]{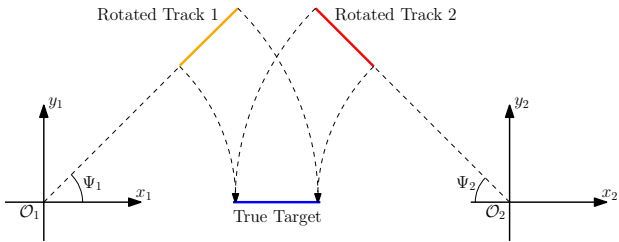}\label{fig:algorithm:two3d}}\\
\caption{Scenario geometry of Example~\ref{exmpl:2rdr:planar}}
\label{fig:algorithm:two3d:radars:plane:setup}
\end{figure}
\begin{figure*}[h!t!]\centering
{\includegraphics[width=0.24\textwidth, trim={0.2cm 0cm 0cm 0.2cm}, clip]{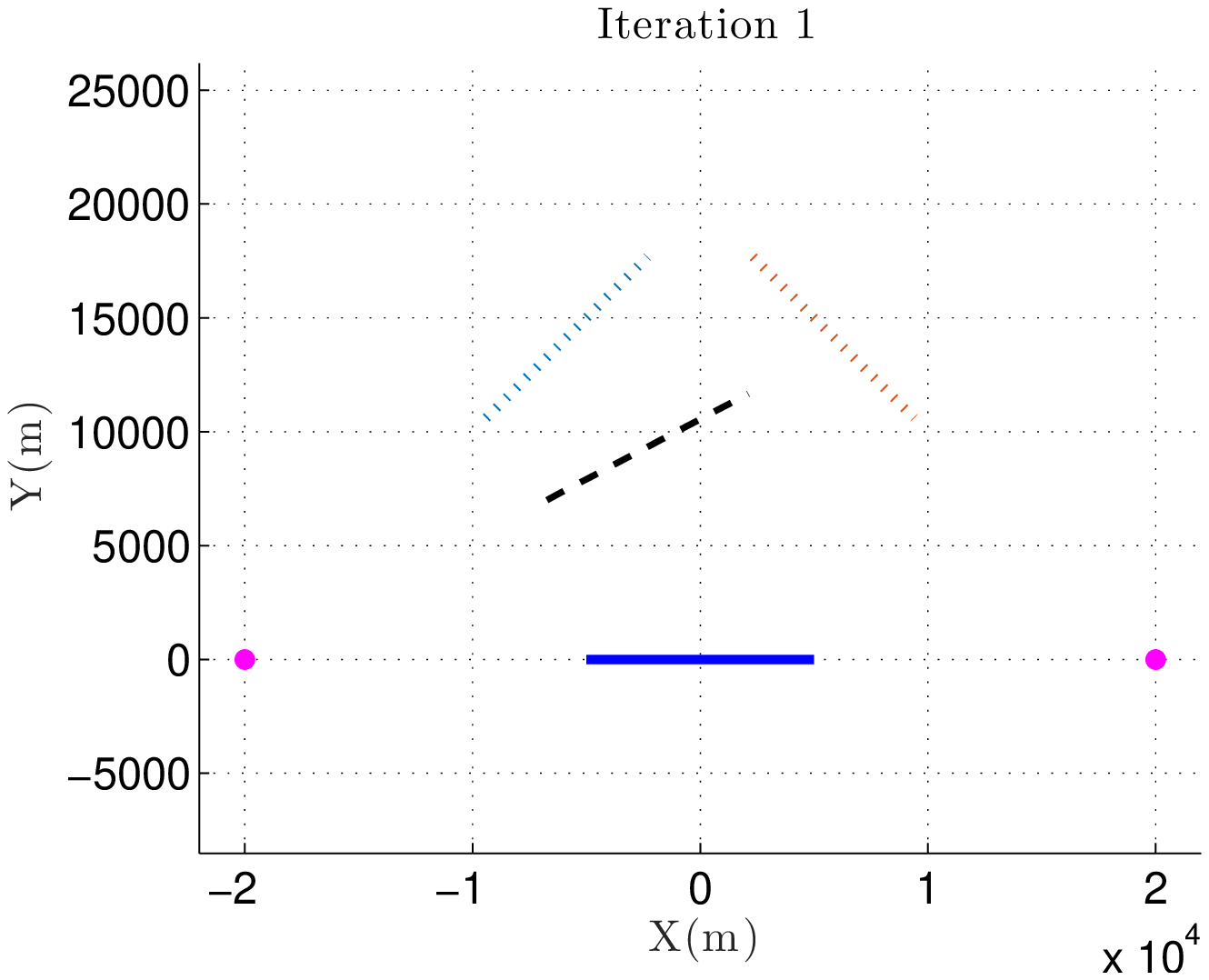} \label{fig:algorithm:two3d:iter1}}
{\includegraphics[width=0.24\textwidth, trim={0.2cm 0cm 0cm 0.2cm}, clip]{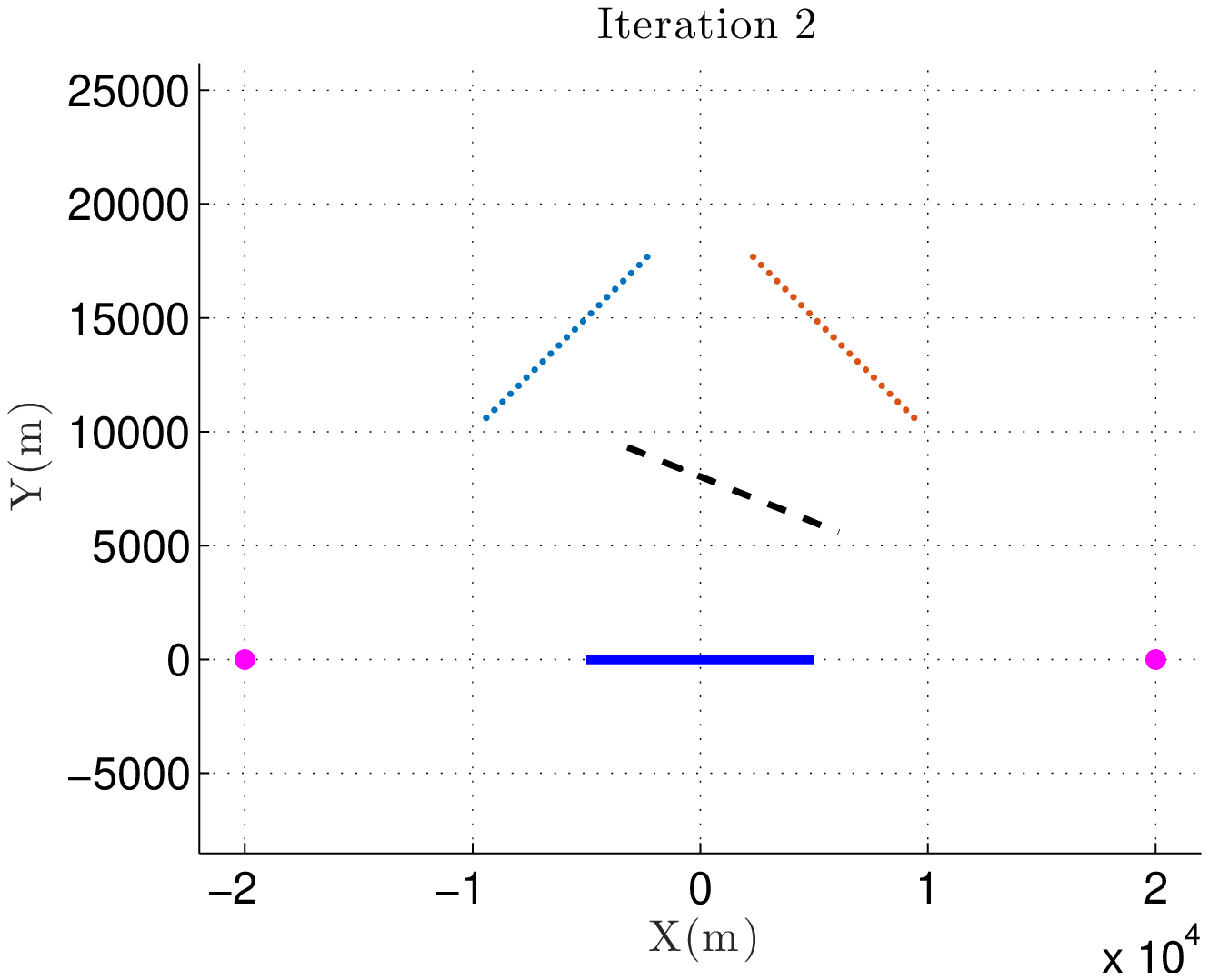} \label{fig:algorithm:two3d:iter2}}
{\includegraphics[width=0.24\textwidth, trim={0.2cm 0cm 0cm 0.2cm}, clip]{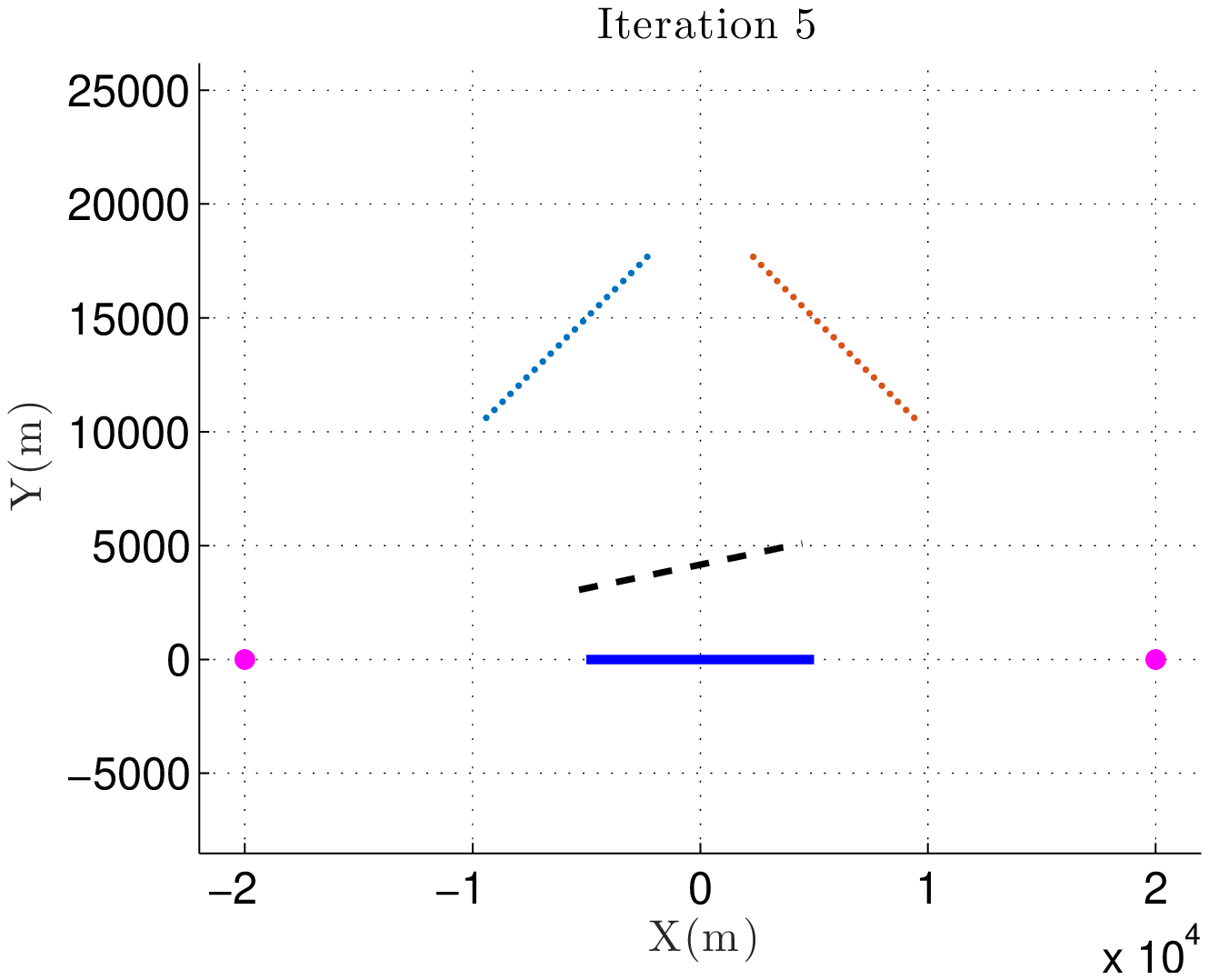} \label{fig:algorithm:two3d:iter5}}
{\includegraphics[width=0.24\textwidth, trim={0.2cm 0cm 0cm 0.2cm}, clip]{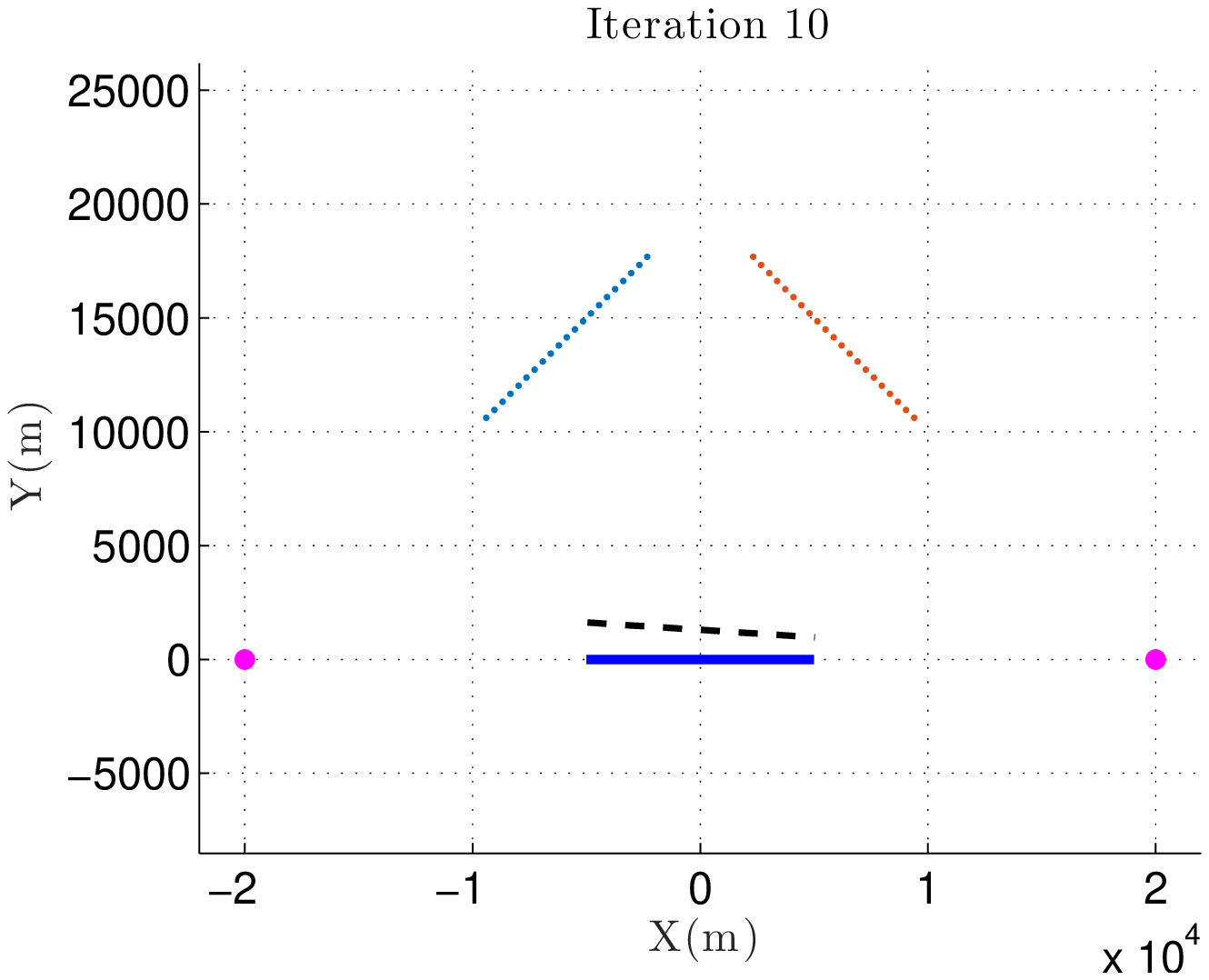} \label{fig:algorithm:two3d:iter10}}
\caption{Selected iterations of Alg.~\ref{Alg3:2rdr:3d:absolute} for the setup of Example~\ref{exmpl:2rdr:planar}. Magenta circles represent the sensors. Solid lines represent the true target and dotted lines are the rotated tracks. Dashed lines represent the reconstruction of the original trajectory.
The rotated tracks converge to the true target thus revealing the originally unknown rotation angles.}
\label{fig:algorithm:two3d:radars:plane:evolution}
\end{figure*}
The axes $x_1-y_1$ of the coordinate system $\mathcal{O}_1$ as well as $x_2-y_2$ of $\mathcal{O}_2$ define a horizontal plane which also contains the true target presented in blue. We assume that the only angular misalignment of the two sensors is about $z_1$ and $z_2$ -- the two vertical axes perpendicular to, respectively, $x_1-y_1$ and  $x_2-y_2$. The corresponding misalignment angles are $\Psi_1$ and $\Psi_2$. Consequently, the true target is represented by the sensors 1 and 2 as, respectively, ``Rotated Track 1'' and ``Rotated Track 2''. Although the two rotation angles are unknown, it is intuitively clear that a minimum of~\eqref{Eq:algorithm:minimization:objective:2sensors} will be attained if ``Rotated Track 1'' is rotated clockwise by $\Psi_1$ about $z_1$ and ``Rotated Track 2'' is rotated counterclockwise by $\Psi_2$ about $z_2$ until both coincide with the blue line representing the true target, thus revealing the originally unknown rotation angles. Note, that the considered setup is indifferent to rotations about the $x_1$ (or $x_2$) axes meaning that biases in this direction cannot be estimated. The evolution of Alg.~\ref{Alg3:2rdr:3d:absolute} is presented in Fig.~\ref{fig:algorithm:two3d:radars:plane:evolution}.
\end{exmpl}

\subsection{$S\geq3$ 3D Sensors, Absolute Calibration}
\label{algorithm:General3Dcase}
We are now at the position to address the case of a general number of three-dimensional sensors. The idea is essentially the same as in the case of $2$ sensors with the exception that, at each iteration, a sequence of relative optimizations is performed such that all $S$ rotation matrices undergo (relative) optimization. Thus, the alternating least-squares approach may now be though of as ``multi-element alternating (constrained) least squares''. Specifically, at each iteration of the proposed algorithm we perform the Wahba's optimization between every two pairs of matrices. The specific order of the sensors, to be followed at each iteration, is user-dependent and its impact on the performance is beyond the scope of this paper. In the numerical study section we used the following order of the $S$ sensors: $1\to2, 2\to1, 1\to3, 3\to1, \ldots, 1\to S,S\to1,2\to3, 3\to2, \ldots 2\to S,S\to2, \ldots, S-1\to S, S\to S-1$, where $s_i\to s_j$ refers to performing relative alignment of sensor $s_i$ to sensor $s_j$. The complete routine is summarized in Alg.~\ref{Alg4:Nrdr:3d:absolute}.


\begin{algorithm}[tbh]\caption{Absolute Calibration of $S$ 3D Sensors}
\begin{algorithmic}[1]
\REQUIRE{$\set{\ul{m}_{s}^i,\,i=1,\ldots,n},\,s=1,2,\ldots,S$.}
\STATE{Compute $\{\ul{p}_{s}^i,\,i=1,\ldots,n\},\,s=1,2,\ldots,S$.}\label{alg4:step1}
\STATE{Initialize ${A}_{s}^0=I_{3\times3},\,s=1,2,\ldots,S$, $j=1$.}\label{alg4:step2}
\REPEAT\label{alg4:step:3}
\FOR{$s=2$ \TO $S$} \label{alg4:step:4}
\FOR{$t=1$ \TO $s-1$} \label{alg4:step:5}
\STATE{Compute $\{\ul{\tilde{p}}_{s}^i
\triangleq{A}_{s}^{j-1}\ul{p}_{s}^i+\ul{\ell}_{s}-\ul{\ell}_{t},\,i=1,\ldots,n\}$.}\label{alg4:step:6} %
\STATE{Compute $A_{t}^j=\verb!Wahba!(\{\ul{{p}}_{t}^i\},\{\ul{\tilde{p}}_{s}^i\})$.}\label{alg4:step:7}
\STATE{Compute $\{\ul{\tilde{p}}_{t}^i
\triangleq{A}_{t}^{j}\ul{p}^i_{t}+\ul{\ell}_{t}-\ul{\ell}_{s},\,i=1,\ldots,n\}$}\label{alg4:step:8} %
\STATE{Compute $A_s^j=\verb!Wahba!(\{\ul{{p}}_{s}^i\},\{\ul{\tilde{p}}_{t}^i\})$.}\label{alg4:step:9}
\ENDFOR
\ENDFOR
\STATE{$j=j+1$}\label{alg4:step:10}
\UNTIL{Stopping criteria are met.}\label{alg4:step:11}
\ENSURE{$A_1,\ldots, A_S$.}%
\end{algorithmic}
\label{Alg4:Nrdr:3d:absolute}
\end{algorithm}



\subsection{Two 2D Sensors, Absolute Calibration}
\label{section:algorithm:Two2Dcase}
In the present subsection we proceed with the case of $S=2$ two-dimensional sensors that deliver (noisy) azimuth and elevation data.
Since two-dimensional measurements do not have a unique three-dimensional representation of the target position, one cannot apply directly the procedure described in Alg.~\ref{Alg3:2rdr:3d:absolute}. However, two or more such sensors, placed at different locations, allow us to estimate the three-dimensional target position by means of triangulation. This is the idea behind the algorithms proposed in the sequel.

Consider again the optimization problem~\eqref{Eq:algorithm:minimization:objective:2sensors} as well as the constraint~\eqref{Eq:algorithm:minimization:constraint:2sensors}. Recall that $\ul{p}_1^i$ and $\ul{p}_2^i$ are vectors in $\reals^3$ representing noisy target positions and generated from the three-dimensional raw sensor measurements. In the present setting we do not have direct access to these vectors and, thus, the optimization problem formulation has to be modified. One possible modification is as follows:
\begin{align}\label{Eq:algorithm:minimization:objective:2sensors:triang}
&\min_{A_{1},A_{2}}\sum_{i=1}^n\norm{A_{1}\ul{\hat{p}}_{1}^i+\ul{\ell}_{1}-\paren{A_{2}\ul{\hat{p}}_{2}^i+\ul{\ell}_{2}}}^2\\\label{Eq:algorithm:minimization:constraint:2sensors:triang}
&{\rm s.t.}\,A_{1},A_{2}\,{\rm rotation\, matrices}\\\label{Eq:algorithm:minimization:triangulation:2sensors:triang}
&(\ul{\hat{p}}_{1}^i,\, \ul{\hat{p}}_{2}^i)=g(m_1^i,m_2^i).
\end{align}
Here, $g:\reals^2\times\reals^2\to\reals^3\times\reals^3$ is a function that, given two pairs of angle measurements, $m_1^i,m_2^i$, returns two cartesian vectors $\ul{\hat{p}}_{1}^i,\,\ul{\hat{p}}_{2}^i$ that represent target positions. The latter are computed using the raw azimuth and elevation data together with range datum obtained from a triangulation of the angle measurements. Note that, up to the missing range datum in the present case and a slight abuse of notation, $m_s^i$ is defined similarly to Eq.~\ref{Eq:problem:sensor:measurement}. The computation of $(\ul{\hat{p}}_{1},\, \ul{\hat{p}}_{2})=g(m_1,m_2)$ is given in the Appendix.

Inspecting the formulation~\eqref{Eq:algorithm:minimization:objective:2sensors:triang}-\eqref{Eq:algorithm:minimization:triangulation:2sensors:triang}
we make the following observation. Given the rotation matrices, computing the cartesian target positions is straightforward. On the other hand, given $(\ul{\hat{p}}_{1}^i,\, \ul{\hat{p}}_{2}^i)$, the problem reduces to finding the required rotation matrices which we have already solved in the previous section. We thus adopt an iterative approach, in which the steps of computing the cartesian target positions and estimating the rotation matrices are performed alternatingly. The resulting solution is summarized in Alg.~\ref{Alg6:2rdr:2d:absolute}.

\begin{algorithm}[tbh]\caption{Absolute Calibration of $2$ 2D Sensors}
\begin{algorithmic}[1]
\REQUIRE{$\set{\ul{m}_{s}^i,\,i=1,\ldots,n},\,s=1,2$.}
\STATE{Initialize ${A}_{1}^0=I_{3\times3}$, ${A}_{2}^0 =I_{3\times3}$, $j=1$.}\label{alg6:step:1}
\REPEAT\label{alg6:step:2}
\STATE{Compute $\{(\ul{\hat{p}}_{1}^i,\ul{\hat{p}}_{2}^i)=g(\ul{m}_{1}^i,\ul{m}_{2}^i),i=1,\ldots,n\}$.}\label{alg6:step:3}
\STATE{Compute $\{\ul{\tilde{p}}_{2}^i
\triangleq{A}_{2}^{j-1}\ul{p}_{2}^i+\ul{\ell}_{2}-\ul{\ell}_{1},\,i=1,\ldots,n\}$.}\label{alg6:step:4} %
\STATE{Compute $A_1^j=\verb!Wahba!(\{\ul{\hat{p}}_{1}^i\},\{\ul{\tilde{p}}_{2}^i\})$.}\label{alg6:step:5}
\STATE{Compute $\{\ul{\tilde{p}}_{1}^i
\triangleq{A}_{1}^{j}\ul{p}_{1}^i+\ul{\ell}_{1}-\ul{\ell}_{2},\,i=1,\ldots,n\}$.}\label{alg6:step:6} %
\STATE{Compute $A_2^j=\verb!Wahba!(\{\ul{\hat{p}}_{2}^i\},\{\ul{\tilde{p}}_{1}^i\})$.}\label{alg6:step:7}
\STATE{Compute $\{\ul{m}_{s}^i,\,i=1,\ldots,n\},\,s=1,2$.}\label{alg6:step:8}
\STATE{$j=j+1$}\label{alg6:step:9}
\UNTIL{Stopping criteria are met.} \label{alg6:step:10}
\ENSURE{$A_1, A_2$.}%
\end{algorithmic}
\label{Alg6:2rdr:2d:absolute}
\end{algorithm}
Note that in order to properly account for the updated rotation matrices after Step~\ref{alg6:step:7}, we use them to recompute $m_1^i$ and $m_2^i$.

\subsection{General 2D Case}
\label{section:algorithm:General2Dcase}
Our final step is the absolute registration of an $S\geq3$ 2D sensors. The algorithm is a straightforward integration of the ideas behind Algs.~\ref{Alg4:Nrdr:3d:absolute} and~\ref{Alg6:2rdr:2d:absolute}. Specifically, at the beginning of each iteration we first compute the estimates of the cartesian target positions based on the currently estimated values of the rotation matrices of each sensor. In the next step, the ``multi-element alternating constrained least-squares'' problem introduced in Subsection~\ref{algorithm:General3Dcase} is performed. The output of this step are the refined estimates of the rotation matrices. We note in passing that the order of sensors, to be followed in the sequence of relative calibrations, remains, as before, user-dependent. We consider the same order of sensors as in the 3D case: $1\to2, 2\to1, 1\to3, 3\to1, \ldots, 1\to S,S\to1,2\to3, 3\to2, \ldots 2\to S,S\to2, \ldots, S-1\to S, S\to S-1$, where $s_i\to s_j$ refers to performing relative alignment of sensor $s_i$ to sensor $s_j$. The complete routine is summarized in Alg.~\ref{Alg7:Nrdr:2d:absolute}.


\begin{algorithm}[tbh]\caption{Absolute Calibration of $S$ 2D Sensors}
\begin{algorithmic}[1]
\REQUIRE{$\set{\ul{m}_{s}^i,\,i=1,\ldots,n},\,s=1,2,\ldots,S$.}
\STATE{Initialize ${A}_{s}^0=I_{3\times3},\,s=1,2,\ldots,S$, $j=1$.}\label{alg7:step1}
\REPEAT\label{alg7:step:2}
\STATE{Compute
$\{(\ul{\hat{p}}_{1}^i,\ldots,\ul{\hat{p}}_{S}^i)=g(\ul{m}_{1}^i,\ldots,\ul{m}_{S}^i),\,i=1,\ldots,n\}$.}\label{alg7:step:3}
\FOR{$s=2$ \TO $S$} \label{alg7:step:6}
\FOR{$t=1$ \TO $s-1$} \label{alg7:step:7}
\STATE{Compute $\{\ul{\tilde{p}}_{s}^i
\triangleq{A}_{s}^{j-1}\ul{p}_{s}^i+\ul{\ell}_{s}-\ul{\ell}_{t},\,i=1,\ldots,n\}$.}\label{alg7:step:8} %
\STATE{Compute $A_{t}^j=\verb!Wahba!(\{\ul{\hat{p}}_{t}^i\},\{\ul{\tilde{p}}_{s}^i\})$.}\label{alg7:step:9}
\STATE{Compute $\{\ul{\tilde{p}}_{t}^i
\triangleq{A}_{t}^{j}\ul{p}^i_{t}+\ul{\ell}_{t}-\ul{\ell}_{s},\,i=1,\ldots,n\}$}\label{alg7:step:10} %
\STATE{Compute $A_s^j=\verb!Wahba!(\{\ul{\hat{p}}_{s}^i\},\{\ul{\tilde{p}}_{t}^i\})$.}\label{alg7:step:11}
\ENDFOR
\ENDFOR
\STATE{Compute $\{\ul{m}_{s}^i,\,i=1,\ldots,n\},\,s=1,2,\ldots,S$.}\label{alg7:step:12}
\STATE{$j=j+1$}\label{alg7:step:13}
\UNTIL{Stopping criteria are met.}\label{alg7:step:14}
\ENSURE{$A_1,\ldots, A_S$.}%
\end{algorithmic}
\label{Alg7:Nrdr:2d:absolute}
\end{algorithm}
Similarly to the 3D case, $g:\reals^2\times\cdots\times\reals^2\to\reals^3\times\cdots\times\reals^3$ is a function that, given $S$ pairs of angle measurements, $m_1^i,\ldots,m_2^i$, returns $S$ cartesian vectors $\ul{\hat{p}}_{1}^i,\ldots,\ul{\hat{p}}_{2}^i$ that represent target positions.

\subsection{Discussion}
Algs.~\ref{Alg4:Nrdr:3d:absolute} and~\ref{Alg7:Nrdr:2d:absolute} are the main results of the present section. We note that the algorithms are straightforward to implement and require no tuning parameters. The only user-dependent decision is the order of sensors for the alternating relative calibration. In addition, it is possible to perform Step~\ref{alg7:step:3} of Alg.~\ref{Alg7:Nrdr:2d:absolute} not only at the beginning of each iteration, but also inside the inner loop. The effect of such modification is beyond the scope of this paper. 

\section{Numerical Study}
\label{section:numerical}
In this section we demonstrate the performance of the derived algorithms in a numerical simulation.
We consider a single target of opportunity observed by a variety of sensors each having an angular misalignment bias. We emphasize that no assumptions are made on the dynamical model of the target and only raw sensor measurements are used to estimate the rotation biases. In the sequel, $\Psi$, $\Theta$, and $\Phi$ are, respectively, the yaw, pitch, and roll rotation angles.

\subsection{Simulation Setup}
We consider a single target moving in a 3D space. The target maintains a constant speed of $10$ (m/s) in the vertical direction and $100$ (m/s) in the horizontal plane, where it performs legs of constant velocity movement in the direction of the $x$-axes interleaved with coordinated turn legs. The target is observed by a varying number of $3$ to $10$ two- or three-dimensional randomly placed sensors. The top-view of the trajectory and the sensor positions are shown in Fig.~\ref{fig:numerical:scenario}.
\begin{figure}[h!t!]\centering
{\includegraphics[width=0.44\textwidth, trim=0cm 0cm 0cm 0cm]{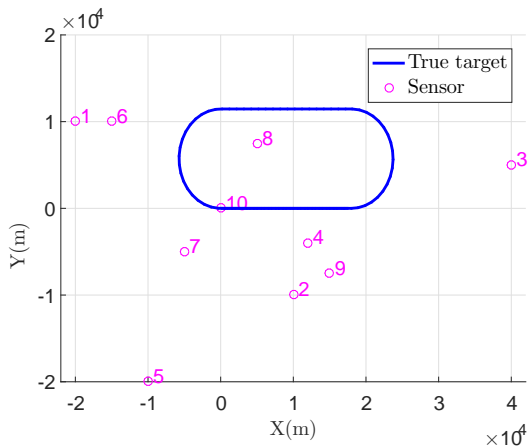}}\hspace{0.2cm}
\caption{Target trajectory and sensor locations.}
\label{fig:numerical:scenario}
\end{figure}

The sensors are assumed to be synchronous with a constant sampling rate of $0.1$ (Hz). Each sensor generates, every $10$ (secs), a noisy measurement comprising azimuth, elevation and possible range data. The duration of the scenario if $15$ minutes.

The measurement noises, defined in~\eqref{Eq:problem:sensor:measurement}, are taken to be zero-mean, independent Gaussian random variables with standard deviations $\sigma_r=10$ (m), $\sigma\triangleq\sigma_a=\sigma_e=3$ (mRad) unless stated otherwise. Sensitivity to these quantities as well as convergence properties are tested in the sequel.

\subsection{Single Run}
For a visual demonstration of the results we consider the above scenario with $3$ 3D sensors and exaggeratedly large misalignment errors. The true and estimated misalignment errors are summarized in Table~\ref{table:numerical:errors:single:run}. The corresponding biased and calibrated trajectories of each sensor, along with the true trajectory, are shown on the left- and right-hand side of Fig.~\ref{fig:numerical:bizarre}, respectively.
\begin{table}[h!t!]
\renewcommand{\arraystretch}{1.3}
\caption{The estimated bias angles of the single run example. The actually used values appear in the parentheses. 
}
\begin{center}
\begin{tabular}{l|c|c|c}
\toprule
& $\Psi$ (deg) & $\Theta$ (deg) & $\Phi$ (deg) \\
\hline
Sensor 1   & $10.003$ $(10)$ & $-9.990$ $(-10)$ & $9.969$ $(10)$  \\
\hline
Sensor 2 & $-10.016$ $(-10)$ & $9.978$ $(10)$ & $9.986$ $(10)$ \\
\hline
Sensor 3 & $9.969$ $(10)$ & $10.000$ $(10)$ & $-9.980$ $(-10)$ \\
\bottomrule
\end{tabular}
\label{table:numerical:errors:single:run}
\end{center}
\end{table}

\begin{figure*}[h!t!]\centering
{\includegraphics[width=0.43\textwidth, trim=0cm 0cm 0cm 0cm]{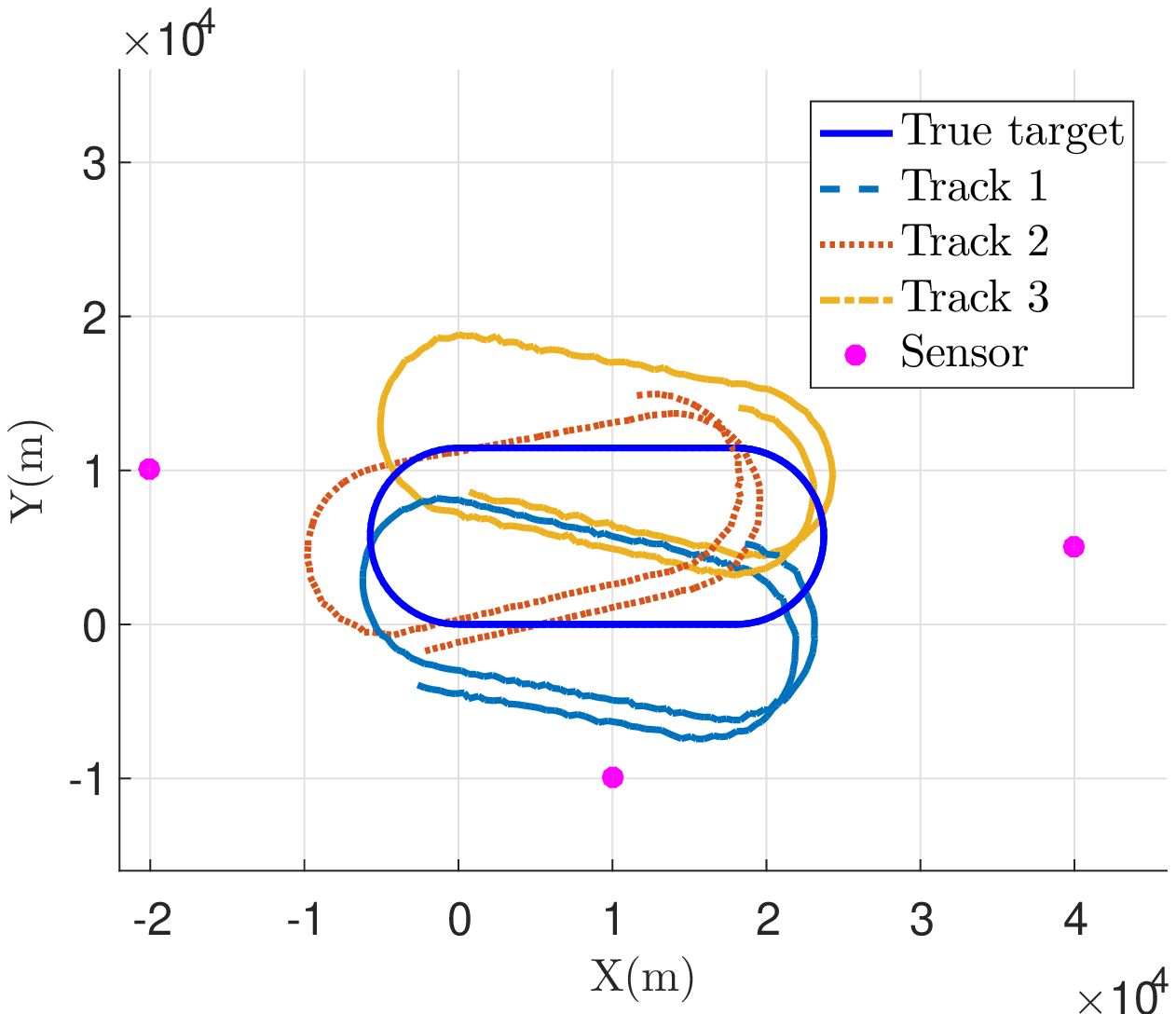}}\hspace{0.2cm}
{\includegraphics[width=0.43\textwidth, trim=0cm 0cm 0cm 0cm]{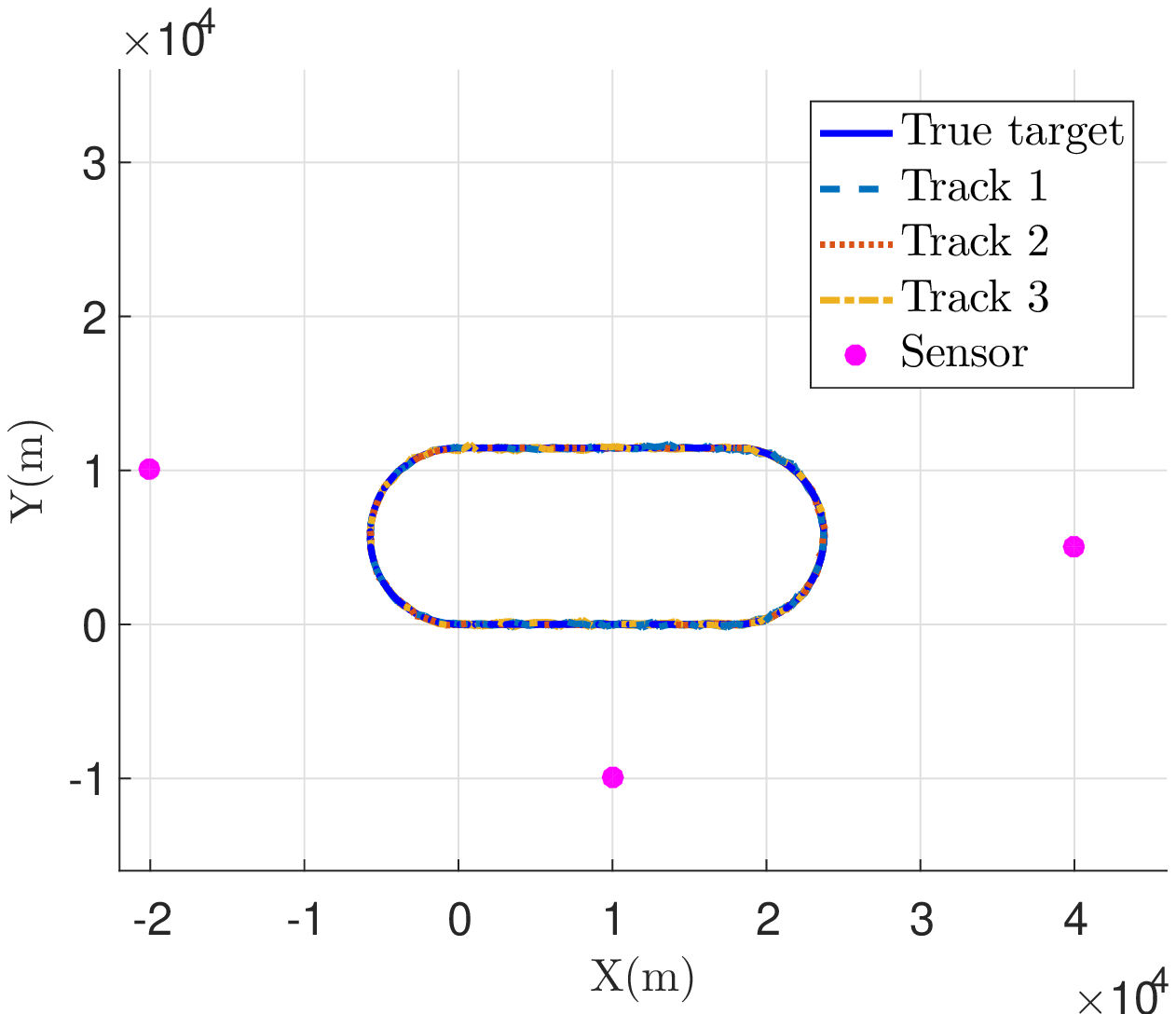}}
\caption{Sensor calibration example with $3$ 3D sensors. Trajectories before calibration (left). Trajectories after calibration (right). All tracks are rotated to the ground truth.}
\label{fig:numerical:bizarre}
\end{figure*}

\subsection{Performance Assessment}
We now present the results of a comprehensive numerical study for the performance assessment of the derived algorithms. We consider the same test trajectory and estimate the rotation matrices of the participating sensors whose number varies from $S=3$ to $S=8$. The bias of each sensor was taken to be between $1$ and $4$ degrees ($17$ to $70$ mRad) in azimuth, elevation and roll. The locations of the sensors are fixed as shown in Fig.~\ref{fig:numerical:scenario}. Each scenario was tested twice -- with two- and three-dimensional sensors. For each sensor configuration (type and number) we performed $50$ independent Monte Carlo (MC) realizations of the measurement noises. The resulting root mean-squared (RMS) estimation errors, averaged across the sensors are presented in Fig.~\ref{fig:numerical:2d3d:errors}.
\begin{figure}[h!]\centering
{\includegraphics[width=0.43\textwidth]{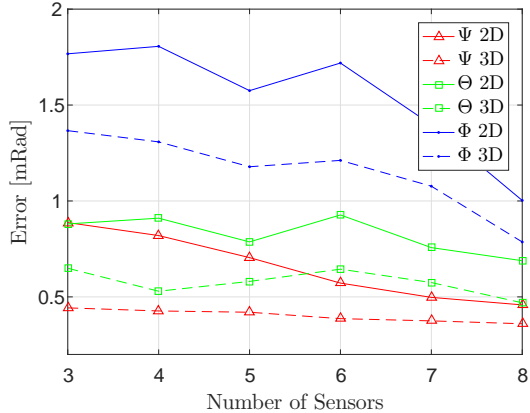}}\hspace{0.2cm}
\caption{RMS errors vs. the number of sensors. 2D and 3D scenarios represented by solid  and dashed lines, respectively. }
\label{fig:numerical:2d3d:errors}
\end{figure}
It is readily seen that both the number of sensors and the dimension of the measurements improve the estimation performance of the bias errors. In addition, note that, despite the $3$ (mRad) measurement noise, all estimation errors are bounded from above by $\approx2$ (mRad) and, in some cases, the errors drops to less than $0.5$ (mRad).
In other words, the original bias errors of up to $70$ (mRad) have been reduced by a factor of $\approx35$ or more.

\subsection{Convergence and Sensitivity Analysis}
We next test the convergence properties of the algorithms as well as their robustness to measurement noise variance and well as the number of samples. Recall that, according to Lemma~\ref{lemma:convergence:2rdr}, the basic algorithm converges for $S=2$ sensors. This desired property, however, is not obvious for a larger number of sensors.

We present the estimation errors as a function of the iteration number for $S=4$ sensors in Fig.~\ref{fig:numerical:convergence}.
The case of 3D sensors is shown on the left-hand side, and the case of 2D sensors is shown on the right-hand side. The bias values as well as measurement noise variance are the same as in the previous subsection.
\begin{figure*}[h!t!]\centering
{\includegraphics[width=0.42\textwidth, trim=0cm 0cm 0cm 0cm, clip]{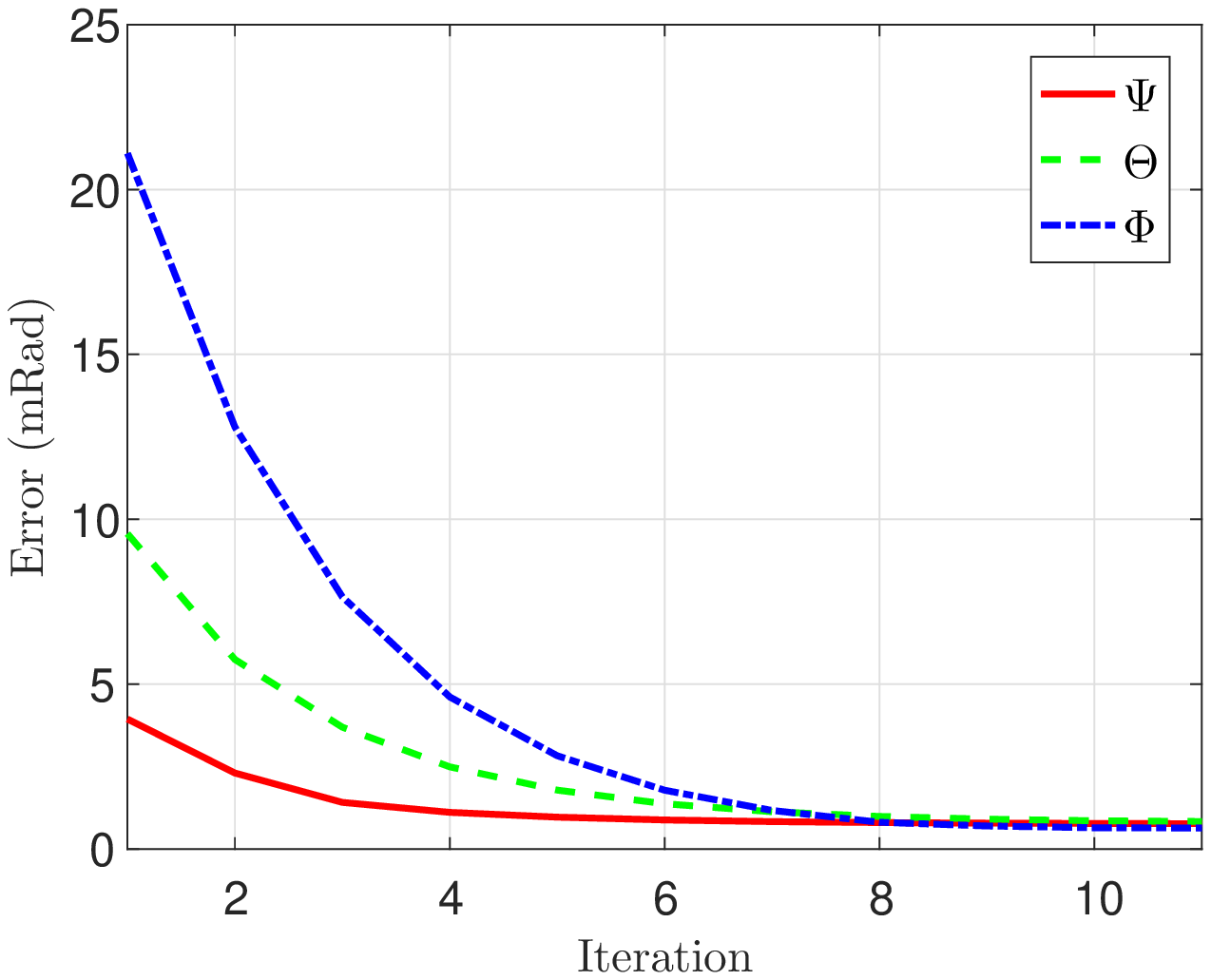}}\hspace{0.2cm}
{\includegraphics[width=0.42\textwidth, trim=0cm 0cm 0cm 0cm, clip]{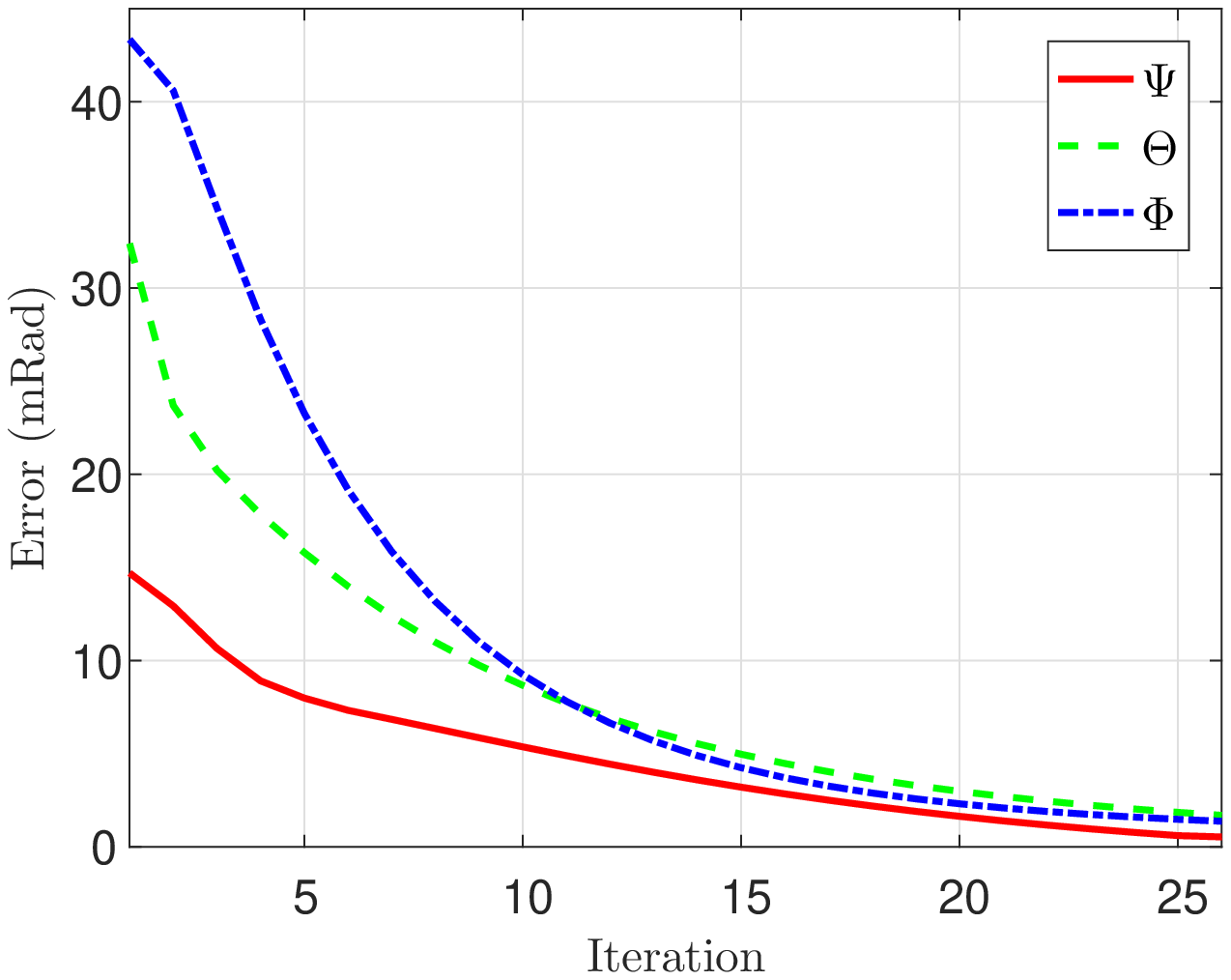}}
\caption{Error convergence vs. iteration number. 3D sensors (left). 2D sensors (right).}
\label{fig:numerical:convergence}
\end{figure*}
As expected, the convergence for the 3D case is significantly faster than that for the 2D case. While the estimates in the former case converge in less than 10 iterations, as much as 25 are needed in the latter case. This is easily explained by the introduction of the triangulation step to the algorithm which strongly depends on the uncompensated biases. On the other hand, bias estimation degrades when triangulation is inaccurate.

Our final study is the sensitivity analysis of the algorithm for 2D sensor calibration to the number of samples and measurement noise. On the left-hand side of Fig.~\ref{fig:numerical:sensitivity} we present the estimation errors for various numbers of measurements and $S=4$ sensors. For clarity of the presentation, the horizontal axes is taken be on a logarithmic scale. When changing the number of measurements, we keep, however, the target-sensor geometry nearly unaltered. That is, in order to reduce the number of measurements we simply down-sample the original trajectory. It is readily seen that with as few as $10$ measurements, the errors in the pitch and yaw angles drop to $\approx2$ (mRad) and the error in the roll angle reduce to $\approx4$ (mRad). The dependence of the estimation errors on the measurement noise variance is nearly linear and is presented on the right-hand side of Fig.~\ref{fig:numerical:sensitivity}.

\begin{figure*}[h!t!]\centering
{\includegraphics[width=0.42\textwidth, trim=0cm 0cm 0cm 0cm, clip]{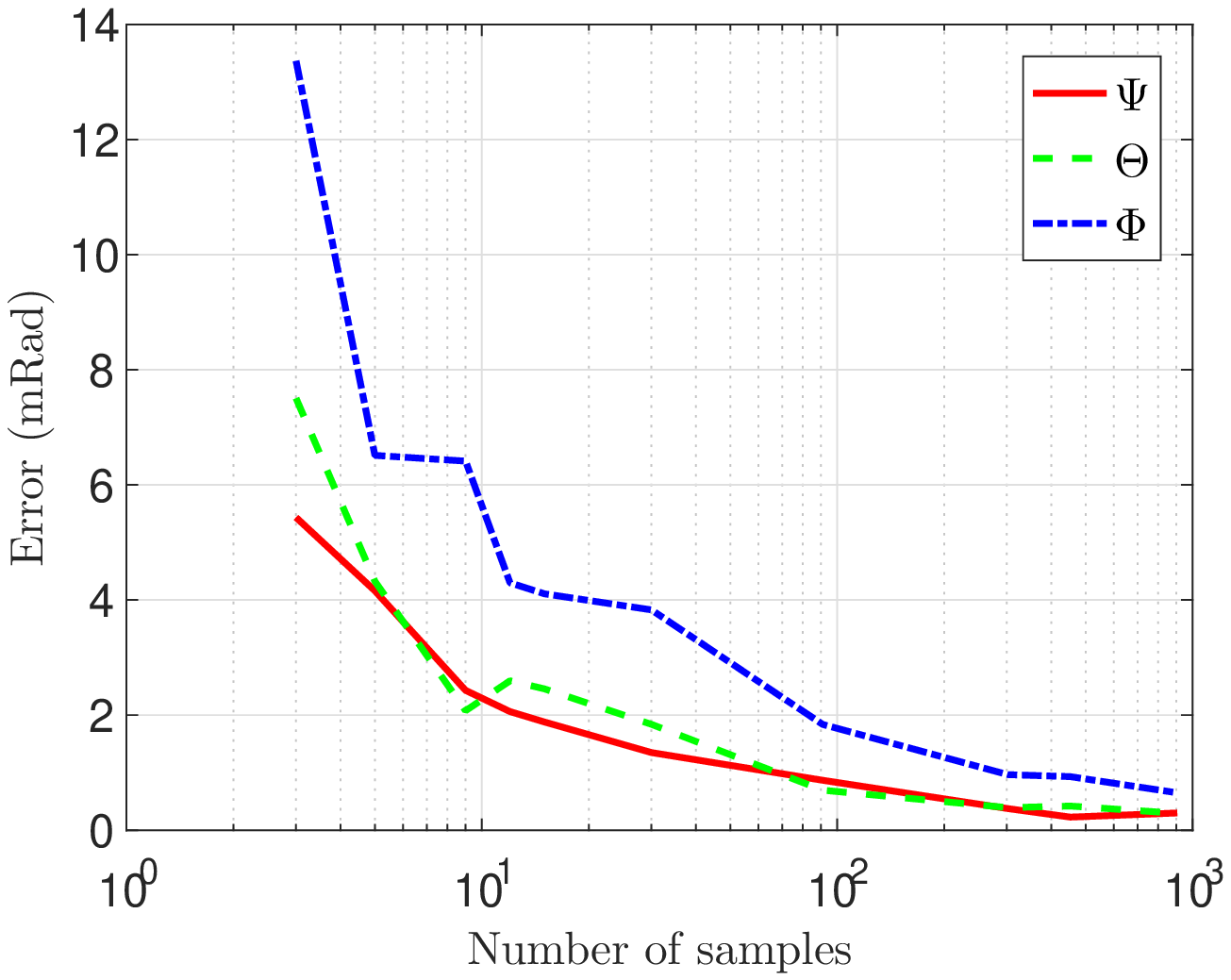}}\hspace{0.2cm}
{\includegraphics[width=0.42\textwidth, trim=0cm 0cm 0cm 0cm, clip]{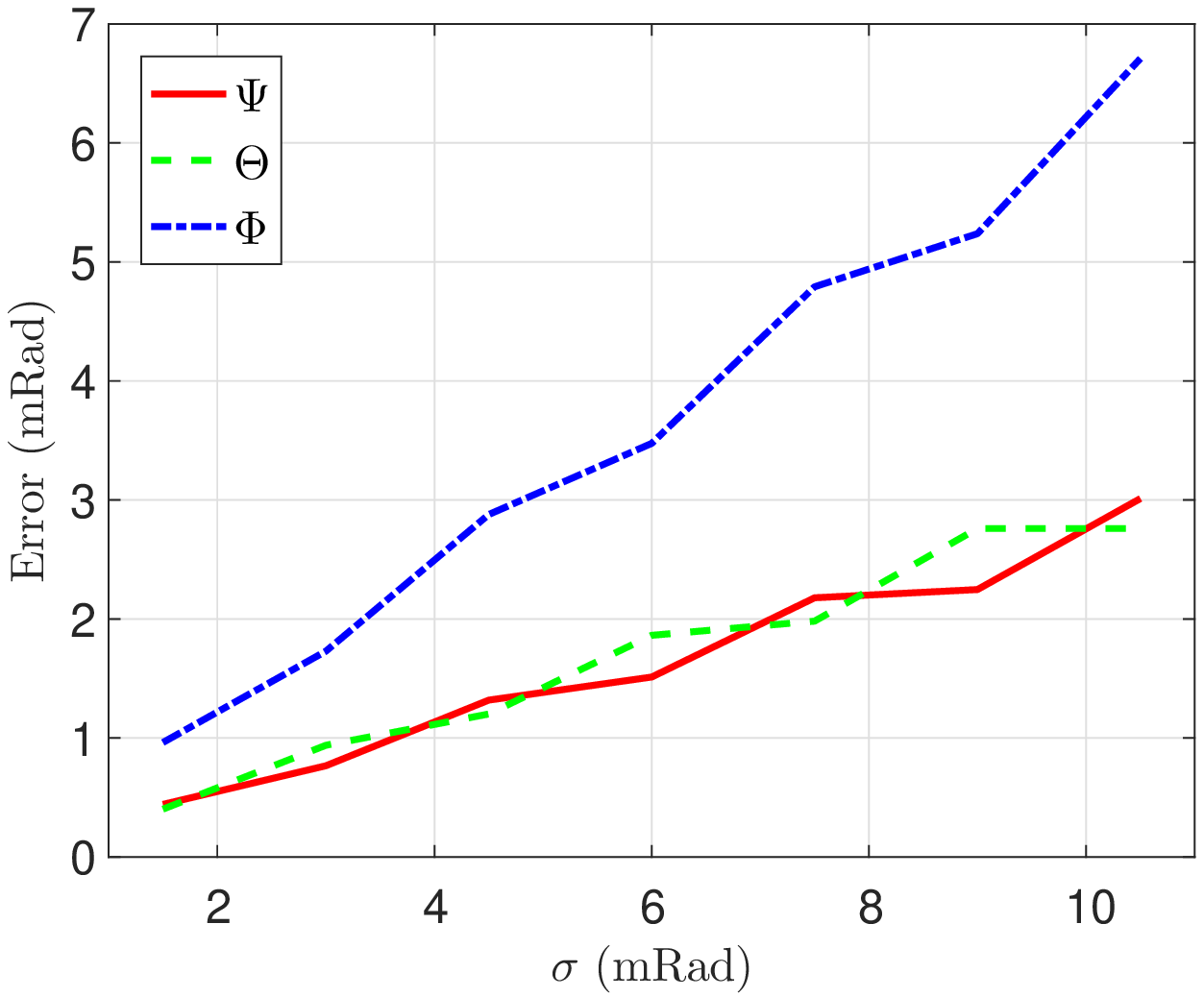}}
\caption{Sensitivity of the algorithm for 2D sensors. Errors vs. number of samples (left). Errors vs. noise std (right).}
\label{fig:numerical:sensitivity}
\end{figure*}

\section{Concluding Remarks}\label{section:concluding}
We presented a simple approach for sensor registration in target tracking applications. The proposed method uses targets of opportunity and, without making assumptions on their dynamical models, allows simultaneous calibration of multiple three- and two-dimensional sensors. The presented algorithms are simple to implement and free of tuning parameters. While it was shown that for two sensors, absolute calibration is possible only up to a degree of freedom, it follows from our numerical study that for $3$ or more sensors unambiguous absolute calibration is achievable.

Several directions may be explored to extend the results of this work. Scenarios with mixed 2D and 3D sensors may be simultaneously calibrated using the ideas presented in this paper. In addition, it is interesting to formulate analytical conditions under which absolute calibration is unambiguously achievable. It is conjectured here that as long as the sensors are not co-linear a unique solution exists. Finally, the performance of the algorithms depends on the sensor-target geometry. Investigation of favorable geometries as well as their characterization is of potential interest.


\section{Appendix}
\label{appendix}
\begin{itemize}
\item Compute $(x,y,z)$ -- the triangulation point of $m_s=({\rm az}_s,{\rm el}_s),\,s=1,2$ by 
solving the nonlinear LS problem
$
\arg\min_{(x,y,z)}\norm{\ul{f}(x,y,z)-({\rm az}_1\,{\rm el}_1\,{\rm az}_2\,{\rm el}_2)^T}^2,
$
where
\[
\ul{f}(x,y,z)=
\begin{pmatrix}
\arctan(\Delta y_1/\Delta x_1)\\
\arctan(\Delta z_1/\sqrt{\Delta x_1^2+\Delta y_1^2})\\
\arctan(\Delta y_2/\Delta x_2)\\
\arctan(\Delta z_2/\sqrt{\Delta x_2^2+\Delta y_2^2})
\end{pmatrix}
\]
and $\Delta x_s\triangleq x-x_{0,s}$, $\Delta y_s\triangleq y-y_{0,s}$, $\Delta z_s\triangleq z-z_{0,s}$.
\item Compute $r_s\triangleq\sqrt{\Delta x_2^2+\Delta y_2^2+\Delta z_2^2}$.
\item Compute $\ul{\hat{p}}_{s}$ from the polar representation $({\rm az}_s,{\rm el}_s,r_s)$.
\end{itemize}


\bibliographystyle{IEEEtran}
\bibliography{../IEEEabrv,../SigalovReferences}

%
%
%

\end{document}